\theoremstyle{definition}
\newtheorem{problemthm}{Problem}[section]
\newenvironment{problem}[1][]{%
  \begin{problemthm} #1%
  \begin{description}[%
    labelwidth=3.5em,
    labelindent=1em,
    itemsep=0.25ex,
    font=\mdseries\itshape
  ]%
}{
  \end{description}%
  \end{problemthm}%
}
\newlist{algenum}{enumerate}{9}
\setlist[algenum,1]{%
  label=\arabic*.,
  itemsep=0ex,
  topsep=1ex,
  leftmargin=2em}
\setlist[algenum,2]{%
  label=\alph*\upshape.,
  itemsep=0ex,
  topsep=-0.375ex,
  leftmargin=1.5em}
\setlist[algenum,3]{%
  label=(\!\!\;\itshape\roman*\;\!\upshape),
  topsep=0ex,
  itemsep=0.375ex,
  leftmargin=2.5em}
\newcommand{\bw}{1}
\tikzstyle{dot}=[inner sep=0.3mm, minimum width=2mm, minimum height=2mm, draw, shape=circle, font={\footnotesize}, tikzit fill=magenta]
\tikzstyle{white dot}=[dot, fill=white, text depth=-0.2mm, tikzit category=ZH-pf]
\tikzstyle{gray dot}=[dot, fill={rgb,255: red,170; green,170; blue,170}, text depth=-0.2mm, tikzit category=ZH-pf]
\tikzstyle{H box}=[rectangle, fill=white, draw=black, xscale=1, yscale=1, font={\footnotesize}, inner sep=0.75pt, minimum width=0.3cm, minimum height=0.3cm, scale=0.75, tikzit shape=rectangle]
\tikzstyle{ug}=[regular polygon, regular polygon sides=3, fill={zx_red}, draw=black, inner sep=0pt, minimum width=1em, tikzit draw=blue]
\tikzstyle{st}=[star, star points=5, fill=white, draw=black, inner sep=1.2pt, line width=1.2pt, tikzit fill=blue, tikzit draw=red, tikzit category=ZH-pf]
\tikzstyle{triangle}=[regular polygon, regular polygon sides=3, fill=white, draw=black, inner sep=0pt, minimum width=1em, tikzit draw=blue, tikzit category=ZH-pf]
\tikzstyle{not}=[fill={rgb,255: red,170; green,170; blue,170}, draw=black, shape=circle, dot, minimum width=2.5mm, label={center:\tiny$\bm\neg$}]
\tikzstyle{bbindex}=[font={\color{blue}\footnotesize}]
\tikzstyle{wide point}=[fill=white, draw, shape=isosceles triangle, shape border rotate=-90, isosceles triangle stretches=true, inner sep=0pt, minimum width=1.5cm, minimum height=6.12mm, yshift=-0.0mm]
\tikzstyle{medium gray box}=[semilarge box, fill={gray!30}]
\tikzstyle{small box}=[rectangle, inline text, fill=white, draw, minimum height=5mm, yshift=-0.5mm, minimum width=5mm, font={\small}]
\tikzstyle{small gray box}=[small box, fill={gray!30}]
\tikzstyle{medium box}=[rectangle, inline text, fill=white, draw, minimum height=5mm, yshift=-0.5mm, minimum width=8mm, font={\small}]
\tikzstyle{gray}=[-, draw={blue!60!white}, tikzit draw=blue]
\tikzstyle{brace edge}=[-, tikzit draw=blue, decorate, decoration={brace,amplitude=1mm,raise=-1mm}]
\tikzstyle{diredge}=[->]
\tikzstyle{bbox edge}=[-, draw={rgb,255: red,42; green,145; blue,255}]
\theoremstyle{definition}
\newtheorem{theorem}{Theorem}[section]
\newtheorem{definition}[theorem]{Definition}
\newtheorem{example}[theorem]{Example}
\newtheorem{example*}[theorem]{Example*}
\newtheorem{examples*}[theorem]{Examples*}
\newtheorem{remark*}[theorem]{Remark*}
\def\bR{\begin{color}{red}} 
\def\bB{\begin{color}{blue}}
\def\bM{\begin{color}{magenta}}
\def\bC{\begin{color}{cyan}}
\def\bW{\begin{color}{white}}
\def\bBl{\begin{color}{black}} 
\def\bG{\begin{color}{green}}
\def\bY{\begin{color}{yellow}}
\def\e{\end{color}\xspace}
\newcommand{\bit}{\begin{itemize}}
\newcommand{\eit}{\end{itemize}\par\noindent}
\newcommand{\ben}{\begin{enumerate}}
\newcommand{\een}{\end{enumerate}\par\noindent}
\newcommand{\beq}{\begin{equation}}
\newcommand{\eeq}{\end{equation}\par\noindent}
\newcommand{\beqa}{\begin{eqnarray*}}
\newcommand{\eeqa}{\end{eqnarray*}\par\noindent}
\newcommand{\beqn}{\begin{eqnarray}}
\newcommand{\eeqn}{\end{eqnarray}\par\noindent}
\newcommand{\bra}[1]{\ensuremath{\langle #1 |}}
\newcommand{\ket}[1]{\ensuremath{|  #1 \rangle}}
\newcommand{\ketbra}[2]{\ensuremath{\ket{#1}\!\bra{#2}}}
\newcommand{\intf}[1]{\left\llbracket #1 \right\rrbracket} % interpretation functor
\tikzstyle{inline text}=[text height=1.5ex, text depth=0.25ex,yshift=0.5mm]
\tikzstyle{semilarge box}=[rectangle,inline text,fill=white,draw,minimum height=5mm,yshift=-0.5mm,minimum width=12.5mm,font=\small]
\tikzstyle{hadamard}=[fill=white,draw,inner sep=0.6mm,minimum height=1.5mm,minimum width=1.5mm]
\tikzstyle{small hadamard}=[fill=white,draw,inner sep=0.6mm,minimum height=1.5mm,minimum width=1.5mm]
\newcommand\gendiagram[1]{%
\begin{tikzpicture}
        \begin{pgfonlayer}{nodelayer}
                \node [style=none] (0) at (-0.75, 1) {};
                \node [style=none] (1) at (0.75, 1) {};
                \node [style=none] (2) at (-0.75, -1) {};
                \node [style=none] (3) at (0.75, -1) {};
                \node [style=none] (4) at (0, 0.75) {$\ldots$};
                \node [style=semilarge box] (5) at (0, -0) {#1};
                \node [style=none] (6) at (0, -0.75) {$\ldots$};
        \end{pgfonlayer}
        \begin{pgfonlayer}{edgelayer}
                \draw (0.center) to (2.center);
                \draw (1.center) to (3.center);
        \end{pgfonlayer}
\end{tikzpicture}
}
\newcommand{\dotmult}[1]{%
\,\begin{tikzpicture}
        \node [#1] (a) {};
        \draw (a) -- (90:0.55);
        \draw (a) (-45:0.6) -- (a);
        \draw (a) (-135:0.6) -- (a);
\end{tikzpicture}\,\xspace}
\newcommand{\graymult}{\dotmult{gray dot}}
\newcommand{\grayphase}[1]{\phase[style=gray phase dot]{#1}}
\newkeycommand{\phase}[style=white phase dot][1]{\,\begin{tikzpicture}
    \begin{pgfonlayer}{nodelayer}
        \node [style=none] (0) at (0, 0.6) {};
        \node [style=\commandkey{style}] (2) at (0, -0) {$#1$};
        \node [style=none] (3) at (0, -0.6) {};
    \end{pgfonlayer}
    \begin{pgfonlayer}{edgelayer}
        \draw (2) to (0.center);
        \draw (3.center) to (2);
    \end{pgfonlayer}
\end{tikzpicture}\,}
\tikzstyle{gray phase dot}=[gray dot]
\newcommand\SAT{\textrm{\upshape SAT}}
\title{Tensor Network Rewriting Strategies\\
for Satisfiability and Counting}
\author{Niel de Beaudrap {\footnotesize$^1$}, Aleks Kissinger {\footnotesize$^1$}, Konstantinos Meichanetzidis {\footnotesize$^{1,2}$}
\institute{$^1$ Quantum Group, Department of Computer Science,
University of Oxford}
\institute{$^2$ Cambridge Quantum Computing Ltd.}
\email{\{firstname.lastname\}@cs.ox.ac.uk}
}
\begin{document}

\maketitle
%\TODO{title very generic. What about `Tensor Network Rewriting Strategies for Satisfiability and Counting'}

\begin{abstract}
\vspace*{-1ex}
We provide a graphical treatment of SAT and \#SAT on equal footing.
Instances of \#SAT can be represented as tensor networks in a standard way.
These tensor networks are interpreted by diagrams of the ZH-calculus: a system to reason about tensors over $\mathbb{C}$ in terms of diagrams built from simple generators,
in which computation may be carried out by \emph{transformations of diagrams alone}.
In general, nodes of ZH diagrams take parameters over $\mathbb{C}$ which determine the tensor coefficients;
for the standard representation of \#SAT instances,
the coefficients take the value $0$ or $1$.
%We present a calculus which is suitable for representing ``tensors'' over the boolean semiring $\mathbb B = \{\bot,\top\} = \{0_{\mathbb B}, 1_{\mathbb B}\}$, and which corresponds to the ZH calculus.
Then, by choosing the coefficients of a diagram to range over
$\mathbb B$, we represent the corresponding instance of SAT.
Thus, by interpreting a diagram either over the boolean semiring
or the complex numbers,
we instantiate either the \emph{decision} or \emph{counting} version of the problem.
We find that for classes known to be in P, such as $2$SAT and \#XORSAT,
the existence of appropriate rewrite rules allows for efficient simplification of the diagram, producing the solution in polynomial time.
In contrast, for classes known to be NP-complete, such as $3$SAT,
or \#P-complete, such as \#$2$SAT, the corresponding rewrite rules
introduce hyperedges to the diagrams, in numbers which are not easily bounded above by a polynomial.
This diagrammatic approach unifies the diagnosis of the complexity of
CSPs and \#CSPs and shows promise in aiding tensor network contraction-based algorithms.

\end{abstract}

{\footnotesize
\hspace*{\fill}{\sl The solution lies within the problem\\
\hspace*{\fill}The answer is in every question} \\
\hspace*{\fill} - Funkadelic
} 

\section{Introduction}

The Boolean satisfiability problem, or SAT,
and its variants,
is central to theoretical computer science and complexity theory,
with many practical real-world applications.
The formula defining the problem
can always be given in conjunctive normal form (CNF)
comprising of an AND of OR-\emph{constraints}.
In a boolean formula on variables $(x_1, x_2, \ldots, x_n)$, a \emph{literal} $\ell$ is either some variable $x_a$ or its negation $\neg x_a$, for some $1 \le a \le n$.

\begin{problem}[CNFSAT (or simply SAT)]
\label{def:SAT}
\item[Input:]%  A boolean formula
  $
    \smash{
      \phi(x)
    =
        \bigwedge\limits_{i=1}^m C_i(x)
    }
  $,
  where $C_i(x) = \smash{\bigvee\limits_{j = 1}^{k_i} \ell_{i,j}}$.
  % for some literals $\ell_{i,j}$.
  \vspace{.2cm}
\item[Output:]
  $\exists x \in \{0,1\}^n : \phi(x)$.
\end{problem}
%\begin{definition}SAT \label{def:SAT}\\
%Input:
%$ \phi(x) = \bigwedge\limits_{i=1}^m \bigvee\limits_{j=1}^{n} \tau_{ij} (\neg)^{\sigma_{ij}} x_{j}~~,~~~~\tau_{ij}, \sigma_{ij}, x_j \in\{0,1\} $ \\
%Output:
%$\exists x | \phi(x) = 1$
%\end{definition}

\noindent
The formula consists of a conjunction of clauses $C_i(x)$ for $1 \le i \le m$, each being a disjunction of a $k_i \in \mathbb N$ number of literals.
%Each OR involves a subset of the Boolean variables $x_j$ according to the Boolean $\tau_{ij}$ and is negated according to the Boolean $\sigma$.
SAT is the \emph{decision} problem of determining whether there \emph{exists} an assignment $x$ of the variables satisfying \emph{all} clauses.
Given a formula $\phi$ in CNF form, we write $\exists \phi$ for the \emph{instance of \SAT\ to evaluate} ${\exists x \!:\! \phi(x)}$, and $[\exists \phi]$ for its answer.

Famously, \SAT\ is NP-complete.
At the heart of this work is the family of special cases $k$SAT,
where $k_i \le k$ for every clause $C_i$.
%i.e. $\sum_{j=1}^n \tau_{ij} = k, \forall i$, in Def.~\ref{def:SAT}.
For $k=2$ we have that $2$SAT $\in$ P.
However, $3$SAT $\in$ NP-complete, and so are all cases for $k > 3$~\cite{10.5555/2086753,cook,levin}.
An important special case of SAT (in the more general formulation of determining whether a boolean formula of \emph{some} form is satisfiable) is XORSAT,
where the OR clauses are replaced with XORs,
%\begin{problem}[XOR-SAT]
%\label{def:SAT}
%\item[Input:]
%  A boolean formula
%  $
%    \smash{
%      \phi(x)
%    =
%        \bigwedge\limits_{i=1}^m C_i(x)
%    }
%  $,
%  where $C_i(x) = \smash{\bigoplus\limits_{j = 1}^{k_i} \ell_{i,j}}$ for some literals $\ell_{i,j}$. 
%\item[Output:]
%  $\exists x \in \{0,1\}^n : \phi(x)$.
%\end{problem}
\emph{i.e.}~replacing $\vee \to \oplus$ in Def.~\ref{def:SAT}.
This special case is significantly easier, in that XORSAT $\in$ P.

%counting
For each decision problem, $\mathrm{X}$,
the corresponding \emph{counting} problem
$\#\mathrm{X}$ asks for the \emph{number} of satisfying assignments to a Boolean formula.
In particular, \#SAT is given as follows:\\

\begin{problem}[\#CNFSAT (or simply \#SAT)] 
\label{def:sharpSAT}
\item[Input:] %  A boolean formula
  $
    \smash{
      \phi(x)
    =
        \bigwedge\limits_{i=1}^m C_i(x)
    }
  $,
  where $C_i(x) = \smash{\bigvee\limits_{j = 1}^{k_i} \ell_{i,j}}$.
  % for some literals $\ell_{i,j}$. 
  \vspace{.2cm}
\item[Output:]
  $\# \bigl\{ x \in \{0,1\}^n : \phi(x)  \}$.
\end{problem}
%\begin{definition}\#SAT\label{def:sharpSAT}\\
%Input:
%$ \phi(x) = \bigwedge\limits_{i=1}^m \bigvee\limits_{j=1}^{n} \tau_{ij} (\neg)^{\sigma_{ij}} x_{j}~~,~~~~\tau_{ij}, \sigma_{ij}, x_j \in\{0,1\} $ \\
%Output:
%$| \{\exists x | \phi(x) = 1 \}|$
%\end{definition}

In general, \#SAT $\in$ \#P-complete~\cite{valiant1979complexity}.
The counting versions of the special cases defined above are denoted similarly \#XORSAT, \#$k$SAT.
Given a formula $\phi$ in CNF form, we write $\# \phi$ for the \emph{instance of \#\SAT} described by $\phi$, and $[\#\phi] := \#\{x: \phi(x)\}$ for its solution.
%(where we have used $\#\mathrm{S}$ to denote the cardinality of a set $\mathrm{S}$).
Note that 
\#XORSAT $\in$ P and that \#$k$SAT $\in$ \#P-complete for $k \geq 2$.
In this sense, \#$2$SAT is significantly harder than its decision version.%, $2$SAT.

This work is motivated by the will to understand the structural origin of the easiness or hardness of these problems.
What is it that makes counting the solutions to XORSAT, and consequently deciding if there is at least one, easy?
Why is $2$SAT easy to decide but hard to count?
What changes when we turn to $3$SAT
and both deciding and counting are hard?
In the following, we investigate these questions by employing graphical methods.

%\section{Representing \#\SAT\ as ZH tensor networks}
\section{Counting with ZH rewrites}

Instances of \#\SAT\ may be represented as tensor networks~\cite{10.21468/SciPostPhys.7.5.060,garcia-saez2011} and the solution is returned by full tensor contraction, a \#P-hard problem in general~\cite{Damm2002}.
Here,
we describe problem instances in terms of a
\emph{specific variety} of tensor network, i.e. diagrams of the ZH-calculus~\cite{backens2018zh,wetering2019completeness} --- a formal system to analyse complex-valued tensor networks of bond-dimension $2$
(all indices ranging over $\{0,1\}$)
by transformations of diagrams.
This provides a means of considering the complexity of these problems by tensor-based techniques, while nevertheless avoiding explicit tensor contractions.

\vspace*{-1ex}
\subsection{Tensor networks from the ZH calculus}
\label{sec:ZH-intro}

The ZH calculus is a diagrammatic notation for tensor networks, which allows computations to be done with these diagrams \emph{in lieu of matrix computations}.
These diagrams consist of graphs with different kinds of nodes:
%\begin{itemize}
%\item
  \emph{Z-spiders}, which are represented as \emph{white dots} with a specific number of wires,
  and
%\item
  \emph{H-boxes}, which are represented as white boxes labelled with a complex number $\alpha\in\mathbb{C}$.
%\end{itemize}
%\noindent
These generators are interpreted as follows, where $\intf{\,\cdot\,}$ denotes the map from diagrams to matrices:
\vspace*{-1ex}
\begin{equation*}
  \label{eqn:ZH-generators}
    \intf{\tikzfig{Z-spider-big}}
  :=
    \ket{0}^{\otimes n}\bra{0}^{\otimes m} + \ket{1}^{\otimes n}\bra{1}^{\otimes m}
    \;,
\qquad\qquad
    \intf{\tikzfig{H-spider}}
  \;:=\;\;
    \sum_{\mathclap{\substack{
      x \in \{0,1\}^n \\ y \in \{0,1\}^m
    }}}
    \;
      \alpha^{x_1\ldots x_n y_1\ldots y_m}
      \ket{x}\bra{y}  
\end{equation*}~\\[-2ex]
The $\mathrm{Z}$-spider is a high-dimensional Kronecker delta-function whose matrix elements are all zeros except
for the two entries indexed by all zeros or all ones which are $1$.
An $\mathrm{H}(\alpha)$-box represents an all-ones matrix with the exception of the entry indexed by all ones which is set to $\alpha$.
%(bottom right cell if rows and columns are in lexicographic order);
If not denoted, by convention $\alpha = -1$.
%A label of $-1$ is usually left out and the box is then drawn smaller, e.g.\ $\hadaunit:=\hadastate{-1}$.
Straight and curved wires have the interpretations:
\vspace*{-1ex}
\begin{equation*}
    \intf{\;|\big.\;}
  :=
    \ketbra{0}{0}+\ketbra{1}{1}
  \;,
\qquad\qquad
    \intf{\;\tikzfig{wire-cup}\;}
  :=
    \ket{00}+\ket{11}
  \;,
\qquad\qquad
    \intf{\;\tikzfig{wire-cap}\;}
  :=
    \bra{00}+\bra{11}
  \;.
\end{equation*}
Diagram juxtaposition corresponds to tensor product and composition to matrix product:
% \vspace*{-1ex}
\begin{equation*}{}
\mspace{-18mu}
    \intf{\gendiagram{\footnotesize $D_1$}\;\gendiagram{\footnotesize $D_2$}}
  :=
    \intf{\gendiagram{\footnotesize $D_1$}} \!\otimes\! \intf{\gendiagram{\footnotesize $D_2$}}
    ,
\qquad
    \intf{\tikzfig{sequential-composition}}
  :=
    \intf{\gendiagram{\footnotesize $D_2$}} \!\circ\! \intf{\gendiagram{\footnotesize $D_1$}}
    .
\end{equation*}
Expressions in the ZH calculus also use two derived generators, called \emph{X-spiders}, represented by a gray dot with a number of wires, and \emph{NOT}.
%Note that unlabelled H-boxes are H-boxes with $\alpha = -1$ by convention:
\vspace*{-1ex}
\begin{equation}
  \label{eq:grey-spider-and-neg-def}
  \tikzfig{X-spider-dfn}
\qquad\qquad\qquad
  \tikzfig{negate-dfn}  
\end{equation}
With these definitions, \graymult\ acts on computational basis states as XOR, i.e. one wire caries the parity of the other wires, and \grayphase{\neg} acts as NOT (or Pauli $\mathrm{X}$). We will write a X-spider labelled by a `$\neg$' to represent an X-spider with a NOT applied to any of its legs. Concretely, the correspond to the even- and odd-parity tensors:
\[
\intf{\tikzfig{X-spider}} \ =\ 
\ \ \ \ 
\sum_{\mathclap{\substack{x_1,...,x_m,y_1,...,y_n\\ \bigoplus x_i \oplus \bigoplus y_j = 0}}} \ \ 
\ket{y_1 \ldots y_n} \bra{x_1 \ldots x_m}
\qquad\qquad
\intf{\tikzfig{X-neg-spider}} \ =\ 
\ \ \ \ 
\sum_{\mathclap{\substack{x_1,...,x_m,y_1,...,y_n\\ \bigoplus x_i \oplus \bigoplus y_j = 1}}} \ \ 
\ket{y_1 \ldots y_n} \bra{x_1 \ldots x_m}
\]
and in the following we will also use the compact notation
\[\tikzfig{maybe-neg-spider} \ =\ \begin{cases}
\tikzfig{X-spider} & \textrm{ if } a = 0 \\[8mm]
\tikzfig{X-neg-spider} & \textrm{ if } a = 1
\end{cases}\]

%\vspace*{-1ex}
%\begin{equation}
%  \label{eqn:notAndGrayZH}
%    \intf{\graymult}
%  =
%    \ketbra{0}{00}+\ketbra{0}{11}+\ketbra{1}{01}+\ketbra{1}{10}
%\qquad\qquad\qquad
%    \intf{\grayphase{\neg}}
%  =
%    \ketbra{0}{1}+\ketbra{1}{0}. 
%\end{equation}

These nodes suffice to express any tensor of any rank with bond-dimension $2$ over $\mathbb C$, as we now show.
For a vector $\vec b \in \{0,1\}^n$, we define the short-hand notation
\vspace*{-1ex}
\begin{equation*}
  \label{eqn:indexing-box}
    \tikzfig{indexing-box}
  \;=\;
    \left(\grayphase{\neg}\right)^{\!\! 1 - b_1} \!\ldots\; \left(\grayphase{\neg}\right)^{\!\! 1 - b_n}.
\end{equation*}
Then, given such a tensor $A$, with $n$ indices, we may construct the ZH diagram representing it,
called the \emph{ZH normal form for $A$},
using one $H$ box for each of its $2^n$ coefficients $a_{x_1 x_2 \cdots x_n}$:
\vspace*{-1ex}
\begin{equation}
\label{eqn:ZHnf}
\begin{gathered}
  \tikzfig{ZH-nf-picture}
%  \quad=\quad
%  \tikzfig{nf-bbox}
\end{gathered} 
\end{equation}
%Following Ref.~\cite{backens2018zh}, we define a short-hand on the right-hand side which stands for iterating over the bit-string $\vec b \in \mathbb B^n$, defining independent sub-diagrams which all connect to the white nodes outside the box in the same way,
%and is denoted by a blue box called \emph{!-box}.
For any coefficient $a_{x_1 x_2 \cdots x_n} = 1$, the corresponding coefficient gadget may be omitted --- we call a ZH diagram with these omitted coefficient gadgets a \emph{condensed normal form}.
%(In the short-hand notation, we may make use of this by letting $\vec b$ range over a subset of $\mathbb B^n$ instead of all of $\mathbb B^n$.)
These normal forms will in many cases not be the most efficient way to represent a tensor by ZH diagrams; for our purposes it suffices that they exist as a means to represent \emph{any particular} diagram.
Thus, the ZH calculus forms a way to represent tensors with diagrams assembled from simple generators.

\vspace*{-1ex}
\subsection{Representing \#\SAT\ instances as $\mathrm{ZH}$ diagrams}

%Instances $\#\phi \in \#\SAT$ may be represented by tensor networks in a straightforward way.
% Extending the notation of \textsf{\color{red!60!black}\bfseries [cite]}, the variables $x_j$ and the clauses $C_i$ of $\phi$ are represented respectively by tensor nodes
% \vspace*{-.5ex}
% \begin{equation}
%   \mathrm{T}^{\mathrm{COPY}}_{x_{\!j,1} x_{\!j,2} \cdots x_{\!j,d}}
%   \;\;,
%   \qquad\qquad
%   \mathrm{T}^{\mathrm{OR}}_{\ell_{i,1} \ell_{i,2} \cdots \ell_{i,k_i}}
%   \;\;;
% \end{equation}~\\[-2.5ex]
% where the $\mathrm{T}^{\mathrm{COPY}}$ node for $x_j$ simply represents $d$ copies of $x_{\!j}$, and evaluates to $1$ iff $x_{\!j,1} = \cdots = x_{\!j,d}$ and $0$ otherwise; and the $\mathrm{T}^{\mathrm{OR}}$ node for clause $C_i$ takes $k_i$ indices (representing the literals $\ell_{i,j}$ for $1 \le j \le k_i$) and evaluates to $(\ell_{i,1} \vee \cdots \vee \ell_{i,k_i}) = 1$.
% The indices $\ell_{i,j}$ taken by the clauses may be either an index of some $\mathrm{T}^{\mathrm{COPY}}$ node (corresponding to some variable $x_j$) or a ``reversed'' index $\bar x_{\!j,t}$\,,  which has been negated, \emph{e.g.},~by some intervening tensor node realising a logical negation (corresponding to some negated variable $\neg x_{\!j}$).
% We may easily describe how these specific tensor networks are represented by ZH diagrams:

Instances $\#\phi \in \#\mathrm{SAT}$ may be represented by ZH tensor networks in a straightforward way.

\begin{theorem}\label{thm:zh-sat-rep}
  Any instance $\#\phi \in \#\mathrm{SAT}$
  can be represented as a closed ZH diagram, i.e. with no open wires,
  which evaluates to $[\#\phi]$,
  composed of the following nodes:
  \vspace*{-1ex}
  \begin{equation*}
      \Big\langle~~\tikzfig{Z-spider}~,
    \qquad
      \tikzfig{H-zero}~,
    \qquad
      \tikzfig{neg}~~\Big\rangle.
  \end{equation*}
\end{theorem}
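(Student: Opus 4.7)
The plan is to construct the required closed ZH diagram directly from $\phi$, using one Z-spider per variable, one H-box with coefficient $0$ per clause, and a NOT on each wire that corresponds to a \emph{positive} literal. Specifically: for each variable $x_a$, introduce a Z-spider whose number of legs equals the number of occurrences of $x_a$ in $\phi$ (counting positive and negative occurrences together); for each clause $C_i$ of width $k_i$, introduce an H-box with coefficient $0$ on $k_i$ legs; and connect the leg of $x_a$'s Z-spider corresponding to the literal $\ell_{i,j}$ to the $j$-th leg of $C_i$'s H-box, inserting a NOT on the connecting wire exactly when $\ell_{i,j} = x_a$ and leaving the wire bare when $\ell_{i,j} = \neg x_a$. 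The result is a diagram with no open wires.

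The heart of the argument is the observation that, by the definition of H-boxes given in the excerpt, the H-box with $\alpha = 0$ on $k$ wires is the indicator of its inputs \emph{not} being all $1$: its entry is $0$ at the all-ones index and $1$ elsewhere. After the NOT bookkeeping, the $j$-th wire into $C_i$'s H-box carries the value $\neg \ell_{i,j}$, so the H-box evaluates to $0$ iff every $\ell_{i,j} = 0$ and to $1$ otherwise; this is precisely the indicator $[C_i(x)]$ of the clause being satisfied. Two NOTs cancel, which is why the rule is just ``insert a NOT iff the literal is positive''.

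For the variable side, I use that the $k_a$-legged Z-spider, when all of its legs are contracted into the rest of the diagram, acts as a sum over a shared bit $x_a \in \{0,1\}$: directly from $\ket{0}^{\otimes k_a} + \ket{1}^{\otimes k_a}$, it contributes $\sum_{x_a \in \{0,1\}} \prod_j (\text{value on leg } j \text{ when the variable is } x_a)$. Combining this with the clause evaluation, the closed diagram evaluates to $\sum_{x \in \{0,1\}^n} \prod_{i=1}^m [C_i(x)] = \#\{x : \phi(x)\} = [\#\phi]$, as required.

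No step is technically deep; the only real care is bookkeeping — matching leg orderings between variable spiders and clause H-boxes, and checking that the NOT convention is consistent with the sign convention for the grey spider and NOT fixed in \eqref{eq:grey-spider-and-neg-def}. It is also worth noting that each clause gadget is exactly an H-zero coefficient block of the kind appearing in the condensed normal form \eqref{eqn:ZHnf}, so the construction may be viewed as a specialisation of the general ZH normal form to tensors whose nonzero coefficients are all $1$.
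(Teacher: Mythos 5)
Your construction is correct and essentially identical to the paper's: one Z-spider per variable, one $\mathrm{H}(0)$-box per clause, and a NOT on each wire carrying a positive literal --- which is exactly the paper's ``all-negated $\mathrm{H}(0)$-box'' for the OR-constraint with the negations cancelled on negated literals. The only difference is presentational: you verify the semantics by directly contracting the tensor ($\mathrm{H}(0)$ as NAND of the negated literals, the Z-spider as a sum over the shared bit), whereas the paper phrases the same diagram as a postselected AND-of-ORs circuit.
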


% \begin{proof}
%   Note that by Eqn.~\eqref{eqn:ZH-generators}, the white dots of degree $d$ are precisely the tensor nodes $\mathrm{T}^{\mathrm{COPY}}_{x_{j_1} x_{j_2} \cdots {x_j}_d}$ of rank $d$.
%   Also by Eqn.~\eqref{eqn:ZH-generators}, a ``0-box'' with indices $i_1, i_2, \ldots, i_k$ have coefficients $0^{i_1 \cdots i_k}$.
%   Taking the empty product convention $0^0 = 1$, this is equivalent to the function $\NAND(i_1, i_2, \ldots, i_k)$.
%   Using the correspondance
%   \begin{equation}
%     \NAND(i_1, i_2, \ldots, i_k)
%     \;=\;
%     \bigl( \neg i_1 \vee \neg i_2 \vee \cdots \vee \neg i_k \bigr),
%   \end{equation}
%   we may represent a tensor node $\mathrm{T}^{\mathrm{OR}}_{\ell_{i,1}  \ell_{i,2} \cdots \ell_{i,k_i}}$ by a 0-box with indices $\bar{\ell}_{i,1}, \ldots, \bar{\ell}_{i,k_i}$.
%   For a literal $\ell_{i,s} = x_j$, we introduce a not-dot with indices $\smash{\bar{\ell}_{i,s}}$ and $x_{j,t}$ for some $t$ (whose coefficients are $1$ if $\bar{\ell}_{i,s} = 1 - x_{j,t}$ and $0$ otherwise); otherwise we identify the index $\bar{\ell}_{i,s}$ with $x_{j,t}$ for some $t$.
% \end{proof}

%\TODO{AK: clarify proof}
\vspace*{-2ex}
\begin{proof} \label{pr:cnf2diag}
We construct a bijection between CNF formulae and ZH diagrams,
keeping consistent with the notation in Defs.~\ref{def:SAT}, \ref{def:sharpSAT}.
The construction is of the form of a circuit (preparation, process, measurement) with postselection.

Variables are initialised as ``states'' to all their possible assignments.
Let $m_j$ be the number of clauses in which $x_j$ or its negation participates.
Each $x_j$ is represented by a $m_j$-ary
Z-spider so that it is copied enough times to participate in the clauses,
after possibly going through a negation,
as dictated by $\phi$.
%   = \# \bigl\{ C_i : x_j \text{ or } \neg x_j \text{ occurs in $C_i$} \bigr\}
%\begin{equation}
%   \tikzfig{Boolvars}
 %=
  % \quad
   %\tikzfig{Boolvars-simpler}.
%\end{equation}
The states then enter the $\mathrm{OR}$-gates,
which play the role of the ``processes''.
Let $k_i$  be the number of variables that are involved in clause $C_i$.
%The $i$-th OR-gate is denoted \[ \tikzfig{OR-normalform-simpler} .\]
Finally, an $m$-ary $\mathrm{AND}$-gate accepts accepting as inputs all outputs of the $\mathrm{OR}$-gates.
Post-selecting this $\mathrm{AND}$-gate on $1$,
implements an ``effect'' and
represents the satisfiability requirement,
which is equivalent to postselecting all outputs of the $\mathrm{OR}$-gates to $1$, by the property of the $\mathrm{AND}$-gate.
A postselected-on-$1$ $\mathrm{OR}$-gate
is an $\mathrm{OR}$-constraint and
is a tensor with entries the OR truth table.
%Taking $0^0 = 1$,
This tensor is represented by an all-negated $\mathrm{H}(0)$-box.
%\[ \tikzfig{OR-box-postsel-gray1} , \tikzfig{OR-box-postsel-gray}, \tikzfig{OR-box-postsel-white}\]
Diagrammatically we have:
%The `state preparation' stage of the problem,
%or the input to the circuit, is:
%\[ \tikzfig{Boolvars-simpler} ~~= ~~\tikzfig{Boolvars-evensimpler}\]
\vspace*{-1ex}
\[ \tikzfig{Boolvars-evensimpler1} ~\dots ~\tikzfig{Boolvars-evensimplern}
~~\mathrm{and}~~~~~ \tikzfig{AND-postsel} = \tikzfig{OR-box-postsel-gray1-k1}~~\dots~~\tikzfig{OR-box-postsel-gray1-km}~,~~~\mathrm{with}~~~\tikzfig{OR-box-postsel-gray1-ki} ~=~ \tikzfig{H-zero-noout}~~~.\]
Thus, any $\phi$ in CNF is expressed as a bipartite $\mathrm{ZH}$ tensor network
by connecting $\mathrm{Z}$-spiders
with all-negated $\mathrm{H}(0)$-boxes. We can capture the fact that a variable appears negated in a given clause by removing a negation from that wire, since two negations cancel out.

In this way, we can exactly capture the instance $\phi$ as a closed ZH-diagram, with one Z-spider for each variable, one $\mathrm{H}(0)$ box for each clause, and wires (possibly with NOTs) connecting them.
\end{proof}

Note that the $\mathrm{H}(0)$-box can be viewed as an $\mathrm{NAND}$-constraint, in the sense that its tensor elements are given by $(\mathrm{H}(0))_{x_1...x_k} = \mathrm{NAND}(x_1, \ldots, x_k)$.
Also, the postselected-circuit construction above returns the tensor network representations of CNFSAT of~\cite{10.21468/SciPostPhys.7.5.060}.

\begin{example}\label{ex:cnf-zh}
Consider the following CNF formula:
\begin{equation}\label{eq:zh-sat-example}
\phi = C_1\wedge C_2\wedge C_3\wedge C_4 = (\neg x_1 \vee x_2 \vee \neg x_3 \vee x_4)\wedge(x_1\vee\neg x_3\vee x_5)\wedge(\neg x_3\vee x_5\vee x_6)\wedge(x_2\vee \neg x_6)
\end{equation}
Its corresponding $\mathrm{ZH}$ tensor network is:
\[\tikzfig{zh-sat-example-formula}
\ \ =\ \ 
\tikzfig{zh-sat-example-formula-simp}\]
\end{example}

%Finally, the $\mathrm{AND}$-gate and $\mathrm{OR}$-gate used in the proof above are gadgets in $\mathrm{ZH}$, which have a condensed normal form:
% \vspace*{-2ex}
% \[ \tikzfig{AND-box} ~~=~~ \tikzfig{AND-ZH} ~~~,~~~~~~\tikzfig{OR-box} ~~=~~ \tikzfig{negAND-box} ~~ = ~~\tikzfig{OR-ZH} ~~~~, ~~~~~~ \tikzfig{and-gate-gadget-def-nodeco}  ~~.\]
% \TODO{Can we cut these out?}

\subsection{Evaluating tensor networks with the ZH calculus}

The $\mathrm{ZH}$ calculus is not just a notation for tensor networks, but a \emph{sound} system 
of diagrammatic transformations
%to transform tensor networks while 
that preserve \emph{semantics}, i.e. the tensor represented by the network.

The most basic diagram transformations, or ``rewrites'', allowed in the ZH calculus %are presented in Fig.~\ref{fig:ZH-rules} (in Appendix \ref{app:ZHrewrites}): these
are ways that sub-networks of ZH generators may be transformed while preserving semantics.
For $\text{Mat}[\mathbb{C}]$,
%and in particular for $\text{Mat}[\mathbb{Z}]$ as a special case,%
  % \footnote{%
  %   There is a caveat here, in that the gray nodes and not-dots are gadgets of ZH generators which involve non-integer tensors (specifically, they involve integer multiples of $\tfrac{1}{2}$).
  %   We may treat this either by promoting the gray nodes and not-dots as generators, or by considering the corresponding rewrites in which both sides are multiplied by the same scalar so that the gray nodes and not dots may be replaced by integer matrices.
  % }
the $\mathrm{ZH}$ rewrite rules
%shown in Fig.~\ref{fig:ZH-rules}
%are even
are not only
\emph{sound},
but also
\emph{complete} --- meaning that any two networks using $\mathrm{ZH}$ generators which are equivalent, may be proven to be equivalent using \emph{only} diagram transformations.
%, by transforming one to the other.
In particular, any closed tensor network (e.g. the ZH-diagram representing $\#\phi$) expressed as a ZH diagram corresponds to a scalar (e.g. $[\#\phi]$). By completeness, it is always possible to transform a closed ZH-diagram into an efficient representation of that scalar, such as a disconnected collection of arity-0 generators.
We call \emph{full simplification} of a closed diagram a sequence of rewrites that removes all wires from the diagram.

Thus, as an alternative to tensor network contraction in the conventional way, we may in principle evaluate $[\#\phi]$ using transformations of ZH diagrams.
Taking this approach, the complexity of the computation is governed
%not by the complexity of the data in the tensor nodes (which are each simple generators), but
by the size and complexity of the network as it is transformed as well as the number of transformation steps taken.
%\TODO{mention there can be a multitude of simplification strategies, not all of which strictly decrease the number of nodes/edges}
This is in contrast to full tensor contraction, in which the network becomes simpler (in terms of edge count) throughout the evaluation.
In this case, the complexity is in terms of the cost of storing the entries of the nodes which increases exponentially with their rank, which in turn is upper bounded by the number of incident wires.
One may elaborate on this by performing SVDs on the tensor nodes to limit the bond-dimension as much as possible \cite{levin2007a,gray2020,verstraete2004}.
%Such an approach only lead to efficient algorithms in special cases.
Naturally, the different approach of the ZH calculus does not ensure that contraction can be performed efficiently --- the number of rewrites required to perform a full simplification may be exponential in the number of nodes of the input tensor, and the diagrams constructed in doing so may also in principle become exponentially complex --- but because the nodes range over a simple set, it is possible in some cases to use relationships between the generators and their types to identify special cases which may be evaluated efficiently.

In either case --- for full tensor contraction, or full simplification by ZH rewrites --- one generally expects to avoid mounting complexity in special cases, either to avoid mounting complexity of the tensor nodes (for tensor contraction) or mounting complexity of networks or difficulty in determining how to transform the networks (for ZH diagrams).
In the following sections, we consider the
complexity of ZH simplification strategies
%issues and strategies involved in full ZH diagram simplification
for $\#\mathrm{SAT}$, $\#\mathrm{XORSAT}$, as well as their corresponding decision problems.

\vspace*{-1ex}
\section{Satisfiability with ZH rewrites}
\label{sec:decision}

One may ask whether or not the \emph{decision} problem $\mathrm{SAT}$ may similarly be represented by ZH diagrams.
%In direct application to \SAT, as with other tensor networks, it is most appropriately applied to problems in counting complexity, where
The decision problem corresponds to determining whether or not the answer to a counting problem is non-zero.
%To approach decision problems more directly, one may consider using \emph{boolean} tensor networks, with coefficients over $\mathbb B = \{ \bot, \top \} = \{ 0_{\mathbb B}, 1_{\mathbb B} \}$ instead of $\mathbb N$ or $\mathbb C$.
%One of our key contributions (in Section~\ref{sec:decision}) is how to reason about such tensor networks, mediated by the ZH calculus.

\vspace*{-1ex}
\subsection{Matrices over semirings}
\label{sec:preliminaries}

The complex numbers $\mathbb C$, natural numbers $\mathbb N$ and the booleans $\mathbb B$ are all commutative semirings: sets $S$ equipped with two commutative and associative binary operations
$+: S \times S \to S$,
with identity element $0_S$,
and $\ast : S \times S \to S$,
with identity element $1_S$.
Operation $\ast$ distributes over $+$
%(as multiplication distributes over addition)
and $0_S$ is absorbing for $\ast$
so that $\alpha \ast 0_S = 0_S \ast \alpha = 0_S, \forall a\in S$.
In $\mathbb{N}$, the operations $+$ and $\ast$ are just the usual addition and multiplication.
For $\mathbb{B}$,
we take $(+,0_{\mathbb B}) = (\vee, 0)$ and $(\ast,1_{\mathbb B}) = (\wedge,1)$, treating disjunction as addition and conjunction as multiplication.
We may also write $a \wedge b$ as a product $ab$, when $a,b \in \mathbb B$.

A homomorphism of semi-rings is a function $S \to T$ which preserves $0, 1, +$ and $*$. The most important semi-ring homomorphisms for our purposes are the inclusion $e : \mathbb N \to \mathbb C$ and the projection $p: \mathbb N \to \mathbb B$. In both cases, these maps are uniquely defined, due to the property of being homomorphisms.
%There is, in fact, a unique\footnote{%
%    Here, uniqueness follows from the requirements $p(0) = 0_{\mathbb B}$, $p(1) = 1_{\mathbb B}$, and $\forall n \in \mathbb N: p(n \!+\! 1) = p(n) \vee 1_{\mathbb B}$.
%  }
%such morphism $p$, given by $p(0) = 0_{\mathbb B}$ and $p(\alpha) = 1_{\mathbb B}$ for all $\alpha > 0$.%
Notably, we can see $\mathbb B$ as a quotient of $\mathbb N$, where we additionally impose the (seemingly nonsensical) equation $1 = 2$. In this case, $p$ is the quotient map.

%The difference between $\mathbb B$ and $\mathbb N$ can essentially be summed up by the question of whether one takes the equation $1 = 2 := 1 + 1$ to hold: this is not sound for $\mathbb N$, but it is sound for $\mathbb B$ where addition is given by disjunction.

For any semi-ring, we can define a monoidal category $\mathrm{Mat}[S]$ of matrices of $S$, whose objects are natural numbers, morphisms $\psi : m \to n$ are $n \times m$ matrices, composition is given by matrix multiplication, and $\otimes$ by tensor product. That is, for $\psi : m \to n$ and $\phi : m' \to n'$, $\psi\otimes \phi : mm' \to nn'$ is a matrix whose elements are $(\psi\otimes \phi)_{m'i + i'}^{n'j+j'} = \psi_i^j * \phi_{i'}^{j'}$. In particular, we can regard a morphism $\psi : m_1 \otimes \ldots \otimes m_k \to n_1 \otimes \ldots \otimes n_l$ as a $(k,l)$-tensor.

We can lift a semiring homomorphism $h: S \to T$ to a functor $H : \mathrm{Mat}[S] \to \mathrm{Mat}[T]$ simply by applying $h$ to each of the elements of a matrix. In this way, we can obtain a faithful functor $E : \mathrm{Mat}[\mathbb N] \to \mathrm{Mat}[\mathbb C]$ from $e$ and a full functor $P : \mathrm{Mat}[\mathbb N] \to \mathrm{Mat}[\mathbb B]$ from $p$.

%We make occasional use of Dirac notation to represent matrices and tensors: for an introduction, see \textsf{\bfseries\color{red!60!black} [cite something]}.

\vspace*{-1ex}
\subsection{From counting to deciding by change of semiring}

%Above, we show how $\#\phi \in \#\SAT$, arising from a CNF formula $\phi$, can be represented by ZH diagrams.
%One may pose different questions about $\phi$, such as "is there a variable assignment satisfying all constraints?" or "how many variable assignments satisfy all constraints?".
%The former question defines the decision problem and the latter the counting problem.

%These problems
Counting and deciding
may be interpreted as asking essentially the same question of ``counting'', but relative to different semirings:
either in $\mathbb N$ for the counting problem, or in $\mathbb B$ for the decision problem.
Therefore,
a closed $\mathrm{ZH}$ diagram
representing a $\phi$ in CNF evaluates to
$[\#\phi]$ by interpreting the diagram as a tensor network over $\mathbb{N}$
and to $[\phi]$
when the diagram is viewed as a tensor with coefficients in $\mathbb{B}$.
%This motivates using boolean tensor networks --- tensors with coefficients taken from $\mathbb B$ and with contraction defined in terms of boolean matrix multiplication --- to evaluate instances of \SAT.
%Using boolean tensor networks, techniques such as rank factorisation can still be used \textsf{\bfseries\color{red!60!black}[cite]}, allowing for some optimisations in the representation of complicated tensor nodes.
%However, --- as with generic tensor contraction --- using
%techniques merely as a means to optimise 
However, graph-partitioning based optimisers for the tensor contraction path are expected to exhibit similar performance regardless of the choice of semiring over which the tensor network's coefficients take values~\cite{10.21468/SciPostPhys.7.5.060}.
%\TODO{citation needed}
This is the case even for easy problem families such as $\mathrm{XORSAT}$ or $2\mathrm{SAT}$.
%do not lead to efficient algorithms even for special cases such as XORSAT or 2\SAT, which are known to be easy. 
We present an approach to reasoning about boolean tensor networks using the ZH calculus which
%, among other things,
allows us to recover the efficient solvability of said easy problems.

\vspace*{-1ex}
\subsection{Boolean tensor rewrites from the ZH calculus}
\label{sec:booleanFromZH}

%One could also
Consider approaching the idea of rewriting boolean tensor diagrams, from the direction
of taking an existing diagrammatic calculus, which is well suited to describe matrices over $\mathbb{N}$,
such as the $\mathrm{ZH}$ calculus,
and ``projecting'' it down to the boolean semiring $\mathbb B$.
Subtly, this cannot be simply done by changing the semiring from which the node parameters are chosen.
This is because
%This idea quickly runs into some obstacles, mainly that
there is no corresponding element to $-1$ in the booleans ($1$ has no additive inverse) whereas $\mathrm{H}(-1)$-boxes %with $\alpha = -1$
play a prominent role in the $\mathrm{ZH}$ rewrites.
%of Table~\ref{fig:ZH-rules} (page~\pageref{fig:ZH-rules}).
Put another way, there is no semiring homomorphism $\mu: \mathbb C \to \mathbb B$ (or even $\nu : \mathbb Z \to \mathbb B$).
Despite this,
%we show in this section
we can still reason soundly about boolean tensor networks with the $\mathrm{ZH}$ calculus.
%how the ZH calculus may be used to reason soundly about boolean tensor networks. 
However, while $\mathrm{ZH}$ is defined for generators taking parameters in $\mathbb C$,
among these generators are ones taking parameters in $\mathbb N$.
Diagrams which are composed exclusively of such generators will represent tensors over $\mathbb N$.
Furthermore, some gadgets of the ZH calculus --- in particular, the $\mathrm{X}$-spider and $\mathrm{NOT}$-dot
of Eqn.~\eqref{eq:grey-spider-and-neg-def} --- 
represent matrices over $\mathbb{N}$.

%In fact, the condensed normal forms for the
%$\mathrm{X}$-spider, and the $\mathrm{NOT}$-spider which we define for convenience (with the $\mathrm{NOT}$-dot as special case of two wires), are:%\TODO{these should be $|b|$ odd and $|b|$ even, not $m-...$. The definition of the $b$-box builds in the negation of bits already.}
%\vspace*{-0.65cm}
%\[ \tikzfig{gray-dot-gadget-def} ~~,~~~~  \tikzfig{gray-not-dot-gadget-def} \]

Any tensor network involving the $\mathbb{N}$-valued generators together with these gadgets will also represent tensors over $\mathbb N$.
We may then consider ways to reason about boolean tensors through the use of these diagrams.
Note that $\mathbb N$-valued ZH generators, together with NOT-dots, suffice to represent the ZH normal forms (and condensed normal forms) of any $\mathbb N$-valued tensor with indices of dimension $2$, as described in Eqn.~\eqref{eqn:ZHnf}.
This motivates the following definition:
\begin{definition}
  A \emph{NatZH diagram} is a $\mathrm{ZH}$ diagram generated from
  \vspace*{-1ex}
  \begin{equation*}
  \begin{aligned}[b]
      \Big\langle~~\tikzfig{Z-spider}~,
    \qquad
      \tikzfig{H-alpha-small}\!\!:\, \alpha \!\in\! \mathbb N~,
    \qquad
      \tikzfig{neg}~~\Big\rangle.
  \end{aligned}
  \end{equation*}
\end{definition}
\noindent
Some of the basic $\mathrm{ZH}$ rewrites
%of Table~\ref{fig:ZH-rules}
may be interpreted as rewrites of $\mathrm{NatZH}$ diagrams, but not all.
Notably, any rule involving an $\mathrm{H}(-1)$-box cannot be realised on $\mathrm{NatZH}$ diagrams, and as the definitions of gray nodes in Eqn.~\eqref{eq:grey-spider-and-neg-def} involve these, rewrites involving them also pose problems.
However, just as it is possible to reason about real polynomials by making use of the complex numbers, any theorem of $\mathrm{ZH}$ which describes the equivalence of $\mathrm{NatZH}$ diagrams is still usable to reason about $\mathrm{NatZH}$ diagrams, regardless of whether the \emph{intermediate} steps preserve the set of $\mathrm{NatZH}$ diagrams. This is essentially re-stating the fact that $\mathrm{Mat}[\mathbb N]$ embeds faithfully in the larger category $\mathrm{Mat}[\mathbb C]$.

%Thus, any theorem of the $\mathrm{ZH}$ calculus concerning $\mathrm{NatZH}$ diagrams is sound as a means of rewriting $\mathrm{NatZH}$ diagrams to reason about natural-number tensors.

%In this way, we may effectively use $\mathrm{NatZH}$ diagrams as a well-defined and feature-rich sub-theory of the $\mathrm{ZH}$ calculus taking advantage of the faithful embedding $\mathrm{Mat}[\mathbb N] \hookrightarrow \mathrm{Mat}[\mathbb C]$.

Having defined a sub-theory of $\mathrm{ZH}$ which maps onto $\mathrm{Mat}[\mathbb N]$, we can approach the subject of boolean tensors via the functor $P: \mathrm{Mat}[\mathbb N] \to \mathrm{Mat}[\mathbb B]$ which projects $\mathbb N$-matrices down to $\mathbb B$-matrices. This projection will preserve any equation we proved between $\mathbb N$-matrices, e.g. those proven with the ZH-calculus. But also, more rules become true. Notably, since in booleans $1 = 2$, $\mathrm{H}(2)$-boxes are the same as  $\mathrm{H}(1)$-boxes, which will have a dramatic effect in Section~\ref{sec:2sat}.

%Then use the standard model $\intf{\,\cdot\,}$ of ZH diagrams to define the model
%\vspace*{-0.5ex}
%\begin{equation*}
%  \intf{\,\cdot\,}_{\mathbb B} \;:=\; P \circ \intf{\,\cdot\,}  
%\end{equation*}~\\[-3ex]
%implementing the projection $\mathrm{Mat}[\mathbb N] \twoheadrightarrow \mathrm{Mat}[\mathbb B]$,
%which is well-defined for $\mathrm{NatZH}$ diagrams.\TODO{AK: this should be stated a bit differently}
%Through this partial model $\intf{\,\cdot\,}_{\mathbb B}$, any theorem of $\mathrm{ZH}$ concerning $\mathrm{NatZH}$ diagrams can be used to reason about ways to \emph{rewrite boolean tensors}
%through theorems of $\mathrm{NatZH}$ diagrams modulo the relation $2 \equiv 1$.
%% (and more generally, $a \equiv 1$ for all $a > 0$).
%Boolean tensors are generated by:
%\vspace*{-1ex}
%  \begin{equation*}
%  \begin{aligned}[b]
%      \Big\langle~~\tikzfig{Z-spider}~,
%    \qquad
%        \tikzfig{H-alpha-small}\!\!:\, \alpha \!\in\! \mathbb B~,
%    \qquad
%      \tikzfig{neg}~~\Big\rangle~~.
%  \end{aligned}
%  \end{equation*}~\\[-2ex]

\section{Identifying Efficiently Simplifiable Instances}

In this section we study the above introduced paradigmatic counting and decision problem families
whose complexity is well studied in the literature.
We will observe that
application of rewrites that
result in \emph{variable elimination},
or killing $\mathrm{Z}$-spiders,
is an efficient diagrammatic technique for solving the problem.
In contrast, (potentially) hard instances are those whose analogous rewrites either block further simplification or exponentially grow the diagram. We will illustrate precisely what we mean by this using the examples in this section.
%such by the fact that such the analogous rewrites introduce hyper-edges in a way that it is not obvious how one might simplify the diagram efficiently.

\subsection{\#XORSAT and XORSAT}

We begin with the easy case of parity constraints.
Since $\mathrm{\#XORSAT}\in$P,
then it trivially follows that $\mathrm{XORSAT}\in$P.
Interestingly however,
this problem shows rich behaviour.
When one tries to anneal to a solution,
one encounters a glassy landscape which makes this approach inefficient~\cite{RicciTersenghi1639,patil2019obstacles}.

%To establish the generators and gadgets
%in terms of which the diagram of a $\mathrm{\#XORSAT}$ instance is expressed,
%consider the following.
As in Proof.~\ref{pr:cnf2diag},
an $\mathrm{XOR}$-constraint is a postselected-on-$1$ $\mathrm{XOR}$-gate.
Since the $\mathrm{XOR}$-gate is the $\mathrm{X}$-spider,
a $\mathrm{XOR}$-constraint is a $\mathrm{NOT}$-spider.
A gray dot can absorb a number of $\mathrm{NOT}$-dots and by \emph{double negation elimination}
it is an $\mathrm{X}$-spider if it absorbs an even number of negations and a $\mathrm{NOT}$-spider if odd.
%The $\neg$-spider is a $\pi$-phased $\mathrm{X}$-spider.
%The $\mathrm{X}$-spider which is a primitive generator of the ZX-calculus and since ZH is complete, it can be regarded as a derived generator (or "theorem") in ZH.
Also, a $\mathrm{NOT}$-dot gets copied by a white dot:
\vspace*{-1ex}
\[ \tikzfig{./figures/neg-spider}~~, ~~~~ \tikzfig{double-negation}~~~,~~~~~~
  \tikzfig{negcopy-rule-short-nodeco}
 \]
Therefore, any \#XORSAT instance diagram is represented in terms of:
\vspace*{1ex}
\[\Big\langle~\tikzfig{Z-spider}~,~\tikzfig{X-spider-pi}~\Big\rangle.\]

Consider Example~\ref{ex:cnf-zh}, but with the $\mathrm{OR}$-constraints replaced by $\mathrm{XOR}$-constraints. Then one gets an example of a $\#\mathrm{XORSAT}$ formula and its representative $\mathrm{ZH}$ tensor network:
\begin{equation*}\label{eq:zh-xorsat-example}
\phi = (\neg x_1 \oplus x_2 \oplus \neg x_3 \oplus x_4)\wedge(x_1\oplus\neg x_3\oplus x_5)\wedge(\neg x_3\oplus x_5\oplus x_6)\wedge(x_2\oplus \neg x_6)
\end{equation*}
\[\tikzfig{zh-xorsat-example-formula}\]

The $\mathrm{X}$-spider,
and by consequence the $\mathrm{NOT}$-spider satisfy the $\mathrm{ZH}$ bialgebra law~\cite{wetering2019completeness}.
Also, a looping wire evaluates to $0$
if the wire is negated
and to $2$ if the wire is naked.
In the boolean case a naked looping wire can be trivially removed since $p(2)=1$:
\vspace*{1ex}
\[\tikzfig{ZX-bialgebra}~~~~~, ~~~~~~~\tikzfig{ZX-bialgebra-neg}~~~ ,~~~~~~     \tikzfig{not-loop-rule}~~~.\]
%Furthermore, we can always rearrange on which wires the negations are applied, as well as absorb the negations into the constraints as we wish by using:
%\begin{center}
% \includegraphics[scale=0.05]{figures/ZX-spiders-neg.jpg}
% \end{center}

The graph defining the problem is bipartite;
every wire connects a white dot
with a gray dot (either $\mathrm{X}$-spider or $\mathrm{NOT}$-spider).
After one application of the bialgebra rewrite rule,
we can always fuse dots of the same colour:
%(where we denote $\neg^{\text{odd}}=\neg$):
% \[\tikzfig{XORSAT-spiderkill}~~~,~~~~~\text{where}~~~%
% \tikzfig{maybe-neg-spider} \ =\ \begin{cases}
% \tikzfig{X-spider} & \textrm{ if } a = 0 \\[8mm]
% \tikzfig{X-neg-spider} & \textrm{ if } a = 1
% \end{cases}\]
\[\tikzfig{XORSAT-spiderkill}\]
Iteratively using the bialgebra rewrite,
we can eliminate variables
by introducing only polynomially many edges.
In fact, the number of wires introduced
during this dance between gray and white spiders
is quadratically upper-bounded by
%$O(|V|^2)$,
%where $V$ is the set
the number of nodes in the network.
In the case when the solution is not zero,
full simplification returns $[\#\phi]=2^c$, resulting from the number $c$ of disconnected components of the network after the iterated application of the bialgebra rule.
For the decision problem, exactly
the same procedure returns $[\phi]=p(2^c)=1$

\subsection{\#2SAT}

Now that we've seen that $\#\mathrm{XORSAT}$
can be efficiently solved by eliminating variables, we turn to the seemingly simple but actually hard $\#2\mathrm{SAT}$ problem
and attempt a similar strategy.
An instance
$\#\phi\in\#2\mathrm{SAT}$
is represented by a $\mathrm{NatZH}$ diagram generated by
variable tensors, \emph{binary} $\mathrm{H}(0)$-boxes, and negations. The latter two generators compose the binary $\mathrm{OR}$-constraint:
\begin{equation}\label{eq:2sat-gen}
\Big\langle~~\tikzfig{Z-spider}~~,~~~~~\tikzfig{H-zero-2leg}~~,~~~~~\tikzfig{neg}~~\Big\rangle ~~,~~~~~\tikzfig{2OR-box-postsel-gray1} ~=~ \tikzfig{H-zero-noout-2}
\end{equation}
We locally arrange the wires around a variable
so to separate those that go through a negation from the naked wires.
To eliminate a variable, we use the following theorem.

\begin{theorem}\label{thm:sharpsat-rule}
The following rule holds in the ZH-calculus, for all $m,n \geq 0$:
\begin{equation}\label{eq:sharp2sat-rule}
\tikzfig{sharp2sat-rule}
\end{equation}
\end{theorem}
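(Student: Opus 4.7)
The plan is to prove Theorem~\ref{thm:sharpsat-rule} by expanding both sides of \eqref{eq:sharp2sat-rule} as $\mathbb{N}$-valued tensors on the $m+n$ boundary wires and verifying they agree on every computational basis input. Since $\mathrm{Mat}[\mathbb N]$ embeds faithfully into $\mathrm{Mat}[\mathbb C]$, equality at the level of coefficients suffices, and any intermediate ZH rewrites that leave $\mathrm{NatZH}$ are permitted along the way.

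First I would unfold the left-hand side. The central Z-spider acts as a copy, forcing a common value $x \in \{0,1\}$ on all $m+n$ of its legs. Each of the $m$ legs connecting directly to a binary $\mathrm{H}(0)$ box contributes the scalar $(\mathrm{H}(0))_{x, a_i} = 1 - x a_i$ with $a_i$ denoting the boundary index on that box, and each of the $n$ legs that pass through a $\mathrm{NOT}$ dot before reaching an $\mathrm{H}(0)$ box contributes $(\mathrm{H}(0))_{1-x, b_j} = 1 - (1-x) b_j$. Summing over $x$,
\begin{equation*}
\text{LHS}(\vec a, \vec b)
\;=\;
\prod_{i=1}^{m}(1-a_i\cdot 1)\prod_{j=1}^{n} 1 \;+\; \prod_{i=1}^{m}1 \cdot \prod_{j=1}^{n}(1-b_j\cdot 1)
\;=\;
\prod_i (1-a_i) \;+\; \prod_j (1-b_j),
\end{equation*}
so (up to the convention on where the negations sit in the figure) the LHS is a sum of two "all-zero indicators", exactly the expected counting formula obtained by case-splitting on the eliminated variable.

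Next I would expand the RHS using the interpretation of its generators, showing that the bipartite pattern of $mn$ binary $\mathrm{H}(0)$ boxes between the two groups of wires, together with whatever scalar/coefficient gadget the figure specifies, produces the same polynomial in $(\vec a, \vec b)$. The key algebraic identity to invoke is that $\prod_{i,j}(1 - a_i b_j)$ agrees with the LHS everywhere except on the all-zero input, where a single correction gadget repairs the discrepancy; this is what gives the rule its signature "resolution-like" structure of $m\cdot n$ new binary boxes.

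As an alternative, I would attempt a fully diagrammatic derivation: use the Z-spider state-copy rules to split the LHS into two branches corresponding to $x=0$ and $x=1$, apply the simplifications $(\mathrm{H}(0))$-box-with-a-$0$-input trivializes, and $(\mathrm{H}(0))$-box-with-a-$1$-input propagates its constraint; then recombine the two branches using the sum-of-diagrams gadget familiar from the ZH normal form \eqref{eqn:ZHnf}. The main obstacle here is the bookkeeping of scalar factors when the two branches are summed, since this is precisely the step where the $\mathbb{N}$-valued coefficients must be tracked exactly. This scalar bookkeeping is exactly what makes the rule non-trivial as a $\mathrm{NatZH}$ identity and, downstream, what prevents iterated variable elimination from remaining polynomially sized -- foreshadowing the \#P-hardness of \#$2$SAT discussed in the subsequent section.
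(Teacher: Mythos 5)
Your proposal is correct and takes essentially the same route as the paper: the paper also proves the rule by evaluating both sides on computational basis inputs, observing that each side is the tensor with a $2$ at the all-zeros entry, $1$'s where exactly one of the two index groups is all-zero, and $0$'s elsewhere --- which is precisely your identity $\prod_i(1-a_i)+\prod_j(1-b_j)$ versus $\prod_{i,j}(1-a_ib_j)$ with a doubling correction at the all-zeros entry. The only difference is presentational: the paper organizes the check as a three-case diagrammatic simplification after plugging in basis states, while you write out the coefficient polynomials explicitly.
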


\begin{proof}
By completeness of the ZH-calculus, it suffices to show the matrices of the LHS and RHS are equal. In each case, the matrix has a single 2 in the top-left corner, 1's in the rest of the first row and column, and 0's elsewhere.

We can evaluate matrix elements by pre- and post-composing with computational basis states $\{\tikzfig{basis-0} = \ket 0,\ \tikzfig{basis-1} = \ket 1 \}$ and their adjoints and simplifying using the rules of the ZH-calculus. In particular, we use the following rules:
\begin{equation}\label{eq:tidy-up}
\tikzfig{basis-copy}~~~,
\qquad
\tikzfig{hbox-basis-0}~~~,
\qquad
\tikzfig{hbox-basis-1}~~~,
\qquad
\tikzfig{hbox-zero-basis}
\end{equation}
and split into 3 cases:
%Then, using the bialgebra laws and the equations $H(0) \circ \tikzfig{basis-0} = \tikzfig{white-unit}$ and $H(0) \circ \tikzfig{basis-1} = \tikzfig{basis-0}$, we can straightforwardly compute the matrix of the LHS.
(i) If all the inputs are plugged with \tikzfig{basis-0} and outputs with \tikzfig{basis-0-adj}, the LHS simplifies to a single Z-spider with no inputs and outputs and the RHS to a single H-box labelled by $2$. In both cases, this equals the scalar $2$. (ii) If at least 1 input is \tikzfig{basis-1} and 1 output is \tikzfig{basis-1-adj}, the LHS simplifies to a diagram containing at least 1 copy of $\tikzfig{basis-1-adj}\circ\tikzfig{basis-0}$ and RHS contains an $H(0)$ box with no inputs and outputs. Hence both sides go to $0$. (iii) Otherwise the LHS and the RHS both simplify to many copies of $\tikzfig{basis-0-adj} \circ \tikzfig{white-unit}$ or $\tikzfig{basis-1-adj} \circ \tikzfig{white-unit}$, which goes to $1$.
\end{proof}

In addition to the matrix derivation given in the proof above, equation~\eqref{eq:sharp2sat-rule} has a logical interpretation as well, in light of the representation of CNFs as ZH-diagrams from Theorem~\ref{thm:zh-sat-rep}. \textit{Propositional resolution}, i.e. reducing a pair of clauses of the form $(P \vee x)$ and $(\neg x \vee Q)$ to a clause $(P \vee Q)$, is a form of modus ponens which is particularly well-suited to CNFs. If we just focus on the $\mathrm{H}(0)$-boxes in equation~\eqref{eq:sharp2sat-rule}, we see that this transformation replaces all of the clauses containing a single variable $x$ with all of the possible resolutions not containing $x$:
\begin{equation}\label{eq:resolution}
\bigwedge_i (x \vee y_i) \wedge \bigwedge_j (\neg x \vee z_j)
\ \implies\ \ 
\bigwedge_{ij} (y_i \vee z_j)
\end{equation}
Indeed the conclusion of this implication is satisfiable if and only if its premise is.

However, a lingering question is: what is going on with the extra $\mathrm{H}(2)$ on the RHS of \eqref{eq:sharp2sat-rule}? This captures precisely the fact that, for every satisfying assignment to the conclusion of \eqref{eq:resolution} where all the $y_i, z_j$ are true, the premise has $2$ satisfying assignments, one where $x = 0$ and one where $x = 1$.

%By completeness of the ZH-calculus, there will always exist a sequence of rewrites to transform a CNF instance into a single H-box containing its number of solutions. However, this will grow exponentially in general

The extra $\mathrm{H}(2)$-box in the RHS of \eqref{eq:sharp2sat-rule} prevents us from being able to iteratively apply this rule to a CNF instance to obtain a solution in polynomial time. Indeed we should expect to find some obstruction to any polynomial-time strategy, since $\#2\mathrm{SAT}\in$ \#P-complete. However, we will now see what happens when we switch to a category where $2 = 1$.

%This ``natural-spider killing"
%rule between such $\mathrm{NatZH}$ diagrams
%can be checked to hold by direct evaluation of the corresponding tensors,
%and as $\mathrm{ZH}$ is complete,
%there exist a sequence of $\mathrm{ZH}$ rewrites to prove it.

% \begin{proof}
% Follows trivially by direct evaluation of the tensors by using
% \begin{center}
% \includegraphics[scale=0.05]{figures/sharp2SAT-pic1.jpg}~~~~~~~
% \includegraphics[scale=0.05]{figures/sharp2SAT-pic2.jpg}
% \end{center}
% \end{proof}

%\TODO{AK: expand explanation}
%{\color{red}
%What is of importance is that
%attempting to eliminate $\mathrm{Z}$-spider with this rewrite introduces
%a spider of a different species,
%namely the all-negated $\mathrm{H}(2)$-box.}
%This rewrite strategy does not obviously
%simplify the diagram,
%which is expected as $\#2\mathrm{SAT}\in$
%\#P-complete.
%Since the $\mathrm{ZH}$ calculus is complete,
%there exists a sequence of rewrites
%that fully simplifies the diagram.
%However, the number of rewrites needed
%is not obviously upper-bounded by a polynomial in the size of the network (e.g. number of nodes or edges). 

\subsection{2SAT}\label{sec:2sat}

We now look at the
decision version of the above problem
which is in P.
The ZH-diagram of an instance $\phi\in\mathrm{2SAT}$ is generated just as in Theorem~\ref{thm:zh-sat-rep}, but it is evaluated as a tensor network over the Booleans, rather than the natural numbers. In particular, the parameter $2 = 1 + 1$ on the RHS of equation~\eqref{eq:sharp2sat-rule} becomes $1 \vee 1 = 1$ in the case of Booleans. So, the extra $\mathrm{H}(2)$-box on RHS vanishes:
\ctikzfig{2sat-simp}

Hence, for the booleans, we get a new, simpler rule:
\begin{equation}\label{eq:2sat-rule}
\tikzfig{2sat-rule}
\end{equation}

% generated by diagrams of Eq.~\ref{eq:2sat-gen}, interpreted now over the booleans via the projector $P$.
%%\[\Big\langle~\tikzfig{Z-spider}~,~\tikzfig{H-false-2leg}~,~~\tikzfig{neg}~~\Big\rangle.\]
%The natural-spider killing rewrite-rule above in this case reduces to a ``boolean-spider killing" rule.
%This is due to the ``disintegration" of the $\mathrm{H}(2)$-box into independent unary white dots under the boolean projector, $P(\mathrm{H}(2))=\mathrm{H}(p(2))=\mathrm{H}(1)$.
%{\color{blue}These white dots that result from the disintegration
%first absorb the negations 
%fuse into the white dots
%that surround the
%$\mathrm{H}(2)$-box (now disintegrated $\mathrm{H}(1)$-box)
%and then fuse into the
%the input-output wires of Eq.~\ref{eq:sharp2sat-rule}.}
%So, we have:
%\[ \tikzfig{H-one} = \tikzfig{disintegration} ~~\text{and}~~ \tikzfig{neg-white-absorb}~~\Rightarrow ~~~~ \tikzfig{2sat-rule}~~~~\Leftrightarrow~~~~~~~~ \tikzfig{2sat-rule-bbox}\]

%{\color{blue}This rule allows us to eliminate all variables while only generating spiders whose number of wires is quadratically upper-bounded.}

This rule no longer introduces H-boxes that would obstruct further application of the rule to its neighbours. Hence,
any 2SAT instance can be fully simplified as follows.
Every application of \eqref{eq:2sat-rule}
eliminates one variable,
and after every variable elimination
we perform
% $m_j$-many
white-dot fusions,
as in the case of $\#\mathrm{XORSAT}$,
since again the network is bipartite,
every wire, either naked or negated,
connects a
$\mathrm{H}(0)$-box with a white-dot.
\[\tikzfig{2SAT-spiderkill}\]

%\begin{center}
%\includegraphics[scale=0.05]{figures/loop-rules.jpg}
%\end{center}

We observe that all variables can be eliminated by iterative application of the
above boolean-spider killing rule \eqref{eq:2sat-rule}, followed by a small amount of `tidying up'. That is, we can remove `duplicate' $\mathrm{H}(0)$ boxes, which correspond to repeated clauses, and `self-loops', which correspond to a variable occurring twice in a clause, using the following rules:
\[
\tikzfig{parallel-edge}~~~~~~~~~~~~~~~~~~~~~
\tikzfig{0box-loop-rules} \]
In the case of the first 2 kinds of loops, we end up with some basis states, so we are not a CNF-like ZH-diagram. However, applying the rules \eqref{eq:tidy-up} from the proof of Theorem~\ref{thm:sharpsat-rule} will either get back to CNF or produce a $0$ (i.e. `unsatisfiable').

With every white dot removed,
we introduce a number of wires
going through $\mathrm{H}(0)$-boxes
which is quadratic in the arity of that white dot. Since we can remove repeated $\mathrm{H}(0)$-boxes at every step, the arity of a given white dot is at most linear in the number of variables remaining.
%The additional overhead of copying negations and fusing external white dots provides a constant prefactor in the cost of variable elimination.
We thus recover the polynomial cost
of full simplification of $2$SAT instances,
and we witness this purely in terms of diagram rewriting.

\subsection{\#SAT and SAT}

For the generic $\#\mathrm{SAT}$ case,
which is \#P-complete,
and the corresponding $\mathrm{SAT}$ case,
which is NP-complete,
the variable elimination rewrites
introduced above
for $\#2\mathrm{SAT}$ and $\mathrm{SAT}$
have analogues that
introduce $\mathrm{H}(0)$ boxes connected to more than just pairs of white dots. In particular, for \#SAT we have
that killing a white spider results in:
\[ \tikzfig{sharpsat-rule} \]
This rule is straightforward to prove from the \#2SAT version \eqref{eq:2sat-rule} and the basic ZH-calculus rules.

Again, passing to SAT, we can use $2 = 1$ to obtain a simpler rule:
\begin{equation}\label{eq:sat-rule}
\tikzfig{sat-rule}
%~~~~\Leftrightarrow~~~~~~~~ \tikzfig{sat-rule-bbox}
\end{equation}
which we can indeed apply iteratively to solve a SAT instance. However, comparing the 2SAT simplification rule to the SAT version, we notice one crucial difference. In equation \eqref{eq:2sat-rule}, the arity of the $\mathrm{H}(0)$-boxes remains unchanged (at 2). Hence, we can argue that there are \emph{at most polynomially many} of them at each step. However, if the CNF contains clauses of 3 or more variables, iterating \eqref{eq:sat-rule} will cause the arities of $\mathrm{H}(0)$-boxes to grow arbitrarily large. Hence, in the worst case, we will obtain exponentially many distinct $\mathrm{H}(0)$-boxes, and the efficiency of our procedure is destroyed.

%For $\mathrm{SAT}$
%this rewrite rule, after applying the boolean projector $P$ and again due to the disappearance of the $\mathrm{H}(2)$-box, becomes:
%\[ \tikzfig{sat-rule}~~~~\Leftrightarrow~~~~~~~~ \tikzfig{sat-rule-bbox} \]
%Again, as in the case of $\#2\mathrm{SAT}$
%we observe that the hyper-edges introduced by eliminating variables leave the diagram in a form which is not obviously simplifiable.

\section{Conclusions and Future Work}

% \begin{color}{purple!55!black}\sffamily
%   Maybe also: gadget for the implication box? Only if we want to layer on the intuitions maybe, and/or provide a springboard for diagrammatic calculi for classical sentential (or propositional?) logic.
%   \color{blue} YES! but in the next version, maybe after submission to QPL. Mention it in furhter work maybe?
% \end{color}

We have imported rewrite rules from the sound and complete $\mathrm{ZH}$ graphical calculus in order to reason about tensor networks that represent instances of $\#\mathrm{SAT}$ and $\mathrm{SAT}$.
In this way,
we have identified that tensors of instances of problem families known to be in P,
such as $\#\mathrm{XORSAT}$ and $2\mathrm{SAT}$,
can be efficiently rewritten to a scalar encoding the solution to the problem.

In particular,
in order to be able to reason about
boolean tensor networks encoding
instances of $\mathrm{SAT}$,
we observed that tensors encoding
instances of $\#\mathrm{SAT}$
take values over the natural numbers.
The $\mathrm{ZH}$ calculus is complete for $\mathrm{Mat}[\mathbb{N}]$ since it is complete for $\mathrm{Mat}[\mathbb{C}]$,
and so for counting problems we simple used the available rewrites of $\mathrm{ZH}$.
In order to switch from counting to decision,
we interpreted decision as counting but over the boolean semiring.
Interestingly,
projecting from $\mathrm{Mat}[\mathbb{N}]$
to $\mathrm{Mat}[\mathbb{B}]$ gave rise to rewrite rules which made apparent
the difference in complexity between
$\#2\mathrm{SAT}$ and $2\mathrm{SAT}$.

For further work,
we aim to complete ongoing work
with preliminary results
on defining a sound and complete graphical calculus for \emph{boolean} tensor networks.
Such a calculus is motivated by the boolean projection $P$ we have used here
but is an \emph{independent} calculus from $\mathrm{ZH}$.
Notably,
there exists highly relevant work on the $\mathrm{ZX\&}$ calculus~\cite{colecomfort}, which provides a
sound and complete framework
for
natural-number tensors
as well as for boolean tensors.
It is inspired by the $\mathrm{ZX}$
calculus~\cite{Coecke2011} and adds the $\mathrm{AND}$ and $\mathrm{NOT}$ gates as primitive generators.

Finally, we note that the Python library PyZX~\cite{kissinger2019pyzx}, developed to support automated rewriting for ZX-diagrams supports the ZH-calculus as well. Building on this library, we aim to develop a useful tool both
for classical and quantum many-body problems.
It would be interesting to
benchmark such a library against
existing techniques for counting and decision problems,
as well as aid existing tensor contraction
methods by equipping them with rewrite strategies and
create hybrid algorithms for counting problems.

\section{Acknowledgments}

N.dB. is supported by a Fellowship funded by a gift from Tencent Holdings ({\tt tencent.com}).
A.K. would like to acknowledge support from AFOSR grant FA2386-18-1-4028.
K.M. is supported by an 1851 Research Fellowship ({\tt royalcommission1851.org}) and by Cambridge Quantum Computing Ltd..
K.M. acknowledges Stefanos Kourtis for inspiring discussions.

\appendix

\section{ZH rewrite rules}
\label{app:ZHrewrites}

Rewrite rules for the complete ZH-calculus in Fig.~\ref{fig:ZH-rules}.

\begin{figure}[h!]
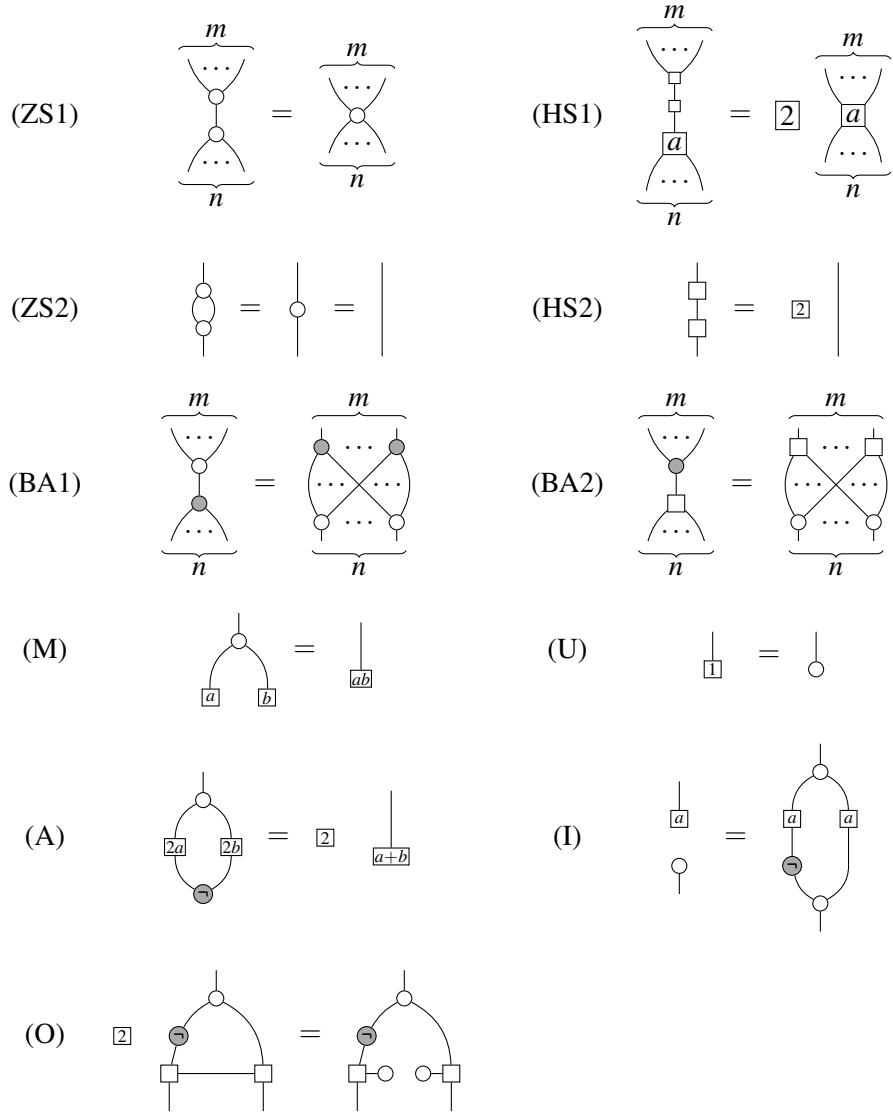

 \centering
 \begin{tabular}{ccccc}
  (ZS1) & \tikzfig{Z-spider-rule} & \qquad & (HS1) & \tikzfig{H-spider-rule} \\ &&&& \\
  (ZS2) & \tikzfig{Z-special} & & (HS2) & \tikzfig{H-identity} \\ &&&& \\
  (BA1) & \tikzfig{ZX-bialgebra} & & (BA2) & \tikzfig{ZH-bialgebra} \\ &&&& \\
  (M) & \tikzfig{multiply-rule} & & (U) & \tikzfig{unit-rule} \\ &&&& \\
  (A) & \tikzfig{average-rule} & & (I) & \tikzfig{intro-rule} \\ &&&& \\
  (O) & \tikzfig{ortho-rule} & & &
 \end{tabular}
 \caption{
 The rules of the ZH-calculus. $m,n$ are nonnegative integers and $a,b$ are arbitrary complex numbers.
 The right-hand sides of both \textit{bialgebra} rules (BA1) and (BA2) are complete bipartite graphs on $(m+n)$ vertices, with an additional input or output for each vertex.
 %The horizontal edges in equation (O) are well-defined because only the topology matters and we do not need to distinguish between inputs and outputs of generators. n.b. the rules (M), (A), (U), (I), and (O) are pronounced \textit{multiply}, \textit{average}, \textit{unit}, \textit{intro}, and \textit{ortho}, respectively.
 }
 \label{fig:ZH-rules}
\end{figure}

\bibliographystyle{eptcs}
\bibliography{refs}

\end{document}